\newcommand{\ignore}[1]{}  
\DeclareMathOperator*{\minimize}{minimize}
\newtheorem{lemma}{Lemma}
\newtheorem{theorem}{Theorem}
\newtheorem{proposition}{Proposition}
\newtheorem{assumption}{Assumption}
\newtheorem{definition}{Definition}
\theoremstyle{definition}
\newtheorem{remark}{Remark}
\newcommand{\dimState}{\ensuremath{n}}
\newcommand{\dimControl}{\ensuremath{p}}
\newcommand{\noise}{\ensuremath{\omega}}
\newcommand{\safeSet}{\ensuremath{S}}
\newcommand{\safeProb}{\ensuremath{V}}
\newcommand{\safeProbOpt}{\ensuremath{V^\star}}
\newcommand{\Rb}{\ensuremath{\mathbb{R}}}
\newcommand{\longthmtitle}[1]{\mbox{}\textup{\textsl{(#1):}}}
\newcommand{\CME}{\ensuremath{\psi}}
\newcommand{\Pbhat}{\widehat{\Pb}}
\newcommand{\Pb}{\ensuremath{\mathbb{P}}}
\newcommand{\PP}{\ensuremath{\mathcal{P}}}
\newcommand{\XX}{\ensuremath{\mathcal{X}}}
\newcommand{\ZZ}{\ensuremath{\mathcal{Z}}}
\newcommand{\YY}{\ensuremath{\mathcal{Y}}}
\newcommand{\MM}{\ensuremath{\mathcal{M}}}
\newcommand{\HH}{\ensuremath{\mathcal{H}}}
\newcommand{\Eb}{\ensuremath{\mathbb{E}}}
\newcommand{\Qb}{\ensuremath{\mathbb{Q}}}
\newcommand{\innerRKHS}[2]{\ensuremath{\langle #1, #2 \rangle}_{\mathcal{H}_{\mathcal{X}}}}
\definecolor{myBlue}{RGB}{0,0,255}
\newif\ifshowmodifications
\newcommand{\changesroundOne}[1]{\ifshowmodifications\textcolor{myBlue}{#1}\else #1\fi}
\title{Distributionally Robust Optimal and Safe Control of Stochastic Systems via Kernel Conditional Mean Embedding [extended version]}
\author{Licio Romao, Ashish R. Hota, and Alessandro Abate%
\thanks{L. Romao and A. Abate are with the Department of Computer Science, Oxford University, UK. Email adresses: \texttt{\{licio.romao,aabate\}@cs.ox.ac.uk}. A. R. Hota is with the Department of Electrical Engineering, Indian Institute of Technology, Kharagpur, India. Email: \texttt{ahota@ee.iitkgp.ac.in}.}}
\begin{document}

\maketitle

\begin{abstract}
We present a novel distributionally robust framework for dynamic programming that uses kernel methods to design feedback control policies. \changesroundOne{Specifically, we leverage kernel mean embedding to map the transition probabilities governing the state evolution into an associated repreducing kernel Hilbert space. Our key idea lies in combining conditional mean embedding with the maximum mean discrepancy distance to construct an ambiguity set, and then design a robust control policy using techniques from distributionally robust optimization. The main theoretical contribution of this paper is to leverage functional analytic tools to prove that optimal policies for this infinite-dimensional min-max problem are Markovian and deterministic.  Additionally, we discuss approximation schemes based on state and input discretization to make the approach computationally tractable. To validate the theoretical findings, we conduct an experiment on safe control for thermostatically controlled loads (TCL).} 

\end{abstract}

\section{Introduction}

\changesroundOne{We focus on discrete-time stochastic control problems, where states evolve according to an underlying stochastic transition kernel. The main objective is to design a feedback control policy that either minimizes an objective function or that maximizes the probability of satisfying temporal properties, such as safety or reach-avoid specifications \cite{Abate2008,Summers2010}.} 


\changesroundOne{Drawing inspirations from recent advancements in the data-driven control literature \cite{Persis2020,Berberich2021}, which advocates for techniques that enable the design of feedback controllers based on available data, we introduce a novel design approach based on dynamic programming that leverages available trajectories of the system. To address sampling errors resulting from finite data, we employ techniques from ``distributionally robust'' optimization and control \cite{Shapiro2017,MohajerinEsfahani2018,HCL19,ALCD22}, which have gained significant attention in the community. These techniques induce robustness in the space of probability measures through the creation of the so-called ambiguity sets. In other words, distributionally robust techniques compute the worst-case expected value of a function, when the underlying measure belongs to such an ambiguity set. In this paper, we relate these ideas to the class of min-max control problems initially investigated in the seminal work
\cite{GonzalezTrejo2002} and further explored in \cite{Ding2013}.} 

\changesroundOne{A key feature of our approach is the use of kernel methods \cite{Muandet2017,ZJDS21}, specifically conditional mean embedding and Maximum Mean Discrepancy (MMD) \cite{SGSS07}. These methods are employed to embed probability distributions into the associated reproducing kernel Hilbert space (RKHS) and measure distance between two probabilities distributions by using MMD. We create an ambiguity set represented as  a ball in the space of probability measures, while the mean embedding provides an estimate of the center of the generated ambiguity set. This allows us to reason about uncertainty and ensure robustness in the decision-making process.}

\changesroundOne{The construction of a min-max control problem through the ambiguity set formulation is relevant, especially in data-limited scenarios where the empirical estimate may not be rich enough to approximate the true conditional mean embedding. While
conditional mean embedding and its empirical estimate have
been applied in the context of Bayesian inference \cite{Fukumizu2013},
dynamical systems \cite{BGG13} and more recently for reachability
analysis \cite{Thorpe2019,Thorpe_2022}, we are not aware of any work that
leverages this framework for control synthesis using distributionally robust dynamic programming. In fact, distributionally robust optimization (DRO) with ambiguity sets defined via MMD and kernel mean embedding has only been recently studied \cite{ZJDS21}, where the authors established the strong duality result for this class of problems. Notably, kernel DRO problems has not received much attention, with \cite{CKA22} being the sole exception.}

\changesroundOne{Before summarizing our main contributions, we highlight the widespread use of kernel methods, MMD, and kernel mean embedding within the machine learning and system identification literature \cite{Pillonetto2011,SGSS07,GLB12,Muandet2017}. For instance, Muandet et al. \cite{Muandet2017} express the expectation of any function of the subsequent state as an inner product between such a function and the conditional mean embedding. When the transition probability is unknown, and we have access to state-input trajectories, the empirical estimate of the conditional mean embedding has been used to approximate the inner product computation \cite{Thorpe2019,Thorpe2022}.} 

\changesroundOne{Moreover, distributionally robust dynamic programming has been studied in several recent contributions \cite{Yang2020,Yang2018,Schuurmans2023,FL22}, where ambiguity sets are defined in an exogenous manner, independent of the current state and action. While this is a reasonable assumption when the state evolution is uncertain in a parametric manner (e.g., the state transition being governed by known dynamics affected by a parametric uncertainty or additive disturbance), the more general case of state evolution given by a stochastic transition kernel requires defining the stochastic state evolution and its associated ambiguity set as a function of the current state and chosen action. This class of ambiguity sets is referred to as {\it decision-dependent ambiguity sets} and often poses challenges for tractable reformulations \cite{Luo2020,Noyan2022}.}
\changesroundOne{Our main contributions are summarized as: (1) We introduce a novel framework that provides a  solution to discrete-time stochastic control problems based on available system trajectories. The key objective is to derive robust control policies against sample errors due to finite data, achieved through the construction of ambiguity sets; (2) We simultaneously employ kernel mean embedding and MMD distance in space of probability distributions, enabling the formulation and solution to a distributionally robust dynamic programming; (3) We leverage certain functional analytic tools and an existing result in the literature to demonstrate that the set of optimal policies obtained through our formulation can be chosen to be Markovian and deterministic.}

\section{Preliminaries} 
\label{sec:prelim}

\subsection{Reproducing kernel Hilbert spaces (RKHS) and kernel mean embeddings}


Let $(\mathcal{X},\mathcal{F}_X)$ be a measurable space, where $\mathcal{X}$ is an abstract set and $\mathcal{F}_X$ represents a $\sigma$-algebra on $\XX$ \changesroundOne{(please refer to Chapter 1 in \cite{Salamon16} for more details)}. Consider a measurable function $k: \mathcal{X} \times \mathcal{X} \mapsto \mathbb{R}$, called a {\it kernel}, that satisfies the following properties.
\begin{itemize}
	\item \textit{Boundedness}: For any $x\in \XX$, we have that $\sup_x |k(x,x)| < \infty$.
    \item \textit{Symmetry}: For any $x, x' \in \mathcal{X}$, we have $k(x,x') = k(x',x)$;
    \item \textit{Positive semidefinite (PSD)}: For any finite collection of points $(x_i)_{i = 1}^m$, where $x_i \in \mathcal{X}$ for all $i = 1, \ldots, m$, and for any vector $\alpha \in \mathbb{R}^m$, we have that 
    \[
    \sum_{i,j = 1}^m \alpha_i \alpha_j k(x_i,x_j) \geq 0.
    \]
    In other words, the Gram matrix $K$ whose $(i,j)$-th entry is given by $k(x_i,x_j)$ is a positive semidefinite matrix for any choice of points $\{x_i\}_{i=1}^m$.
\end{itemize}
A kernel function $k$ satisfying the above three properties is called a {\it positive \changesroundOne{semi}definite} kernel. 


Two consequences are in place with the presence of a positive semidefinite kernel. First, every semipositive definite kernel is associated with a {\it reproducing kernel Hilbert space} (RKHS) 
$\HH_{\XX}$, which is defined as


\changesroundOne{\begin{equation}
	\HH_{\XX} = \overline{\bigcup_{\substack{I \subset \XX \\ I \text{ finite}}} \mathrm{span}\{ k(x,\cdot):\XX \mapsto \mathbb{R},  x \in I \}},
	\label{eq:RKHS-def}
\end{equation}}

\noindent defined as the closure of all possible finite dimensional subspaces induced by the kernel $k$. The space $\HH_{\XX}$ is equipped with the inner product $\innerRKHS{k(\cdot,x_1)}{k(\cdot,x_2)} = k(x_1,x_2)$. We may notice that the boundedness, symmetry and PSD properties above ensure that such an inner product is well-defined; hence, $\HH_{\XX}$ equipped with this inner product is a Hilbert space (see \cite{Brezis11} for more details). By definition, for any function $f \in \HH_{\XX}$ \changesroundOne{there exist a sequence of integers $\{m_n\}_{n \in \mathbb{N}}$ and a sequence of functions}  $f_n(x) = \sum_{i = 1}^{m_n} \beta_i^n k(x_i^n,x)$ such that $f(x) = \lim_{n \to \infty} f_n(x)$, which then implies that 
\[
\innerRKHS{f}{k(\cdot,x)}  = f(x),
\]
thus justifying the reproducing kernel property of the Hilbert space $\HH_{\XX}$. Alternatively, \changesroundOne{by the Riesz representation theorem (Theorem 4.11 in \cite{Brezis11}), any Hilbert space with the reproducing kernel property can be identified with the RKHS of a PSD kernel.} The interested reader is referred to  \cite{SGSS07} and references therein for more details. Second, there exists a feature map $\phi: \XX \mapsto \HH_{\XX}$ with the property $k(x_1,x_2) = \innerRKHS{\phi(x_1)}{\phi(x_2)}$. The canonical feature map is given by $\phi(x) := k(x,\cdot)$\changesroundOne{, and we use the notation $\phi(x)(x') = k(x,x')$.}


We now introduce the notion of {\it kernel mean embedding} of probability measures \cite{SGSS07,Muandet2017}. Let $\mathcal{P}(\mathcal{X})$ be the set of probability measures on $\mathcal{X}$, and $X$ be a random variable defined on $\XX$ with distribution $\mathbb{P}$. The kernel mean embedding is a mapping $\Psi: \mathcal{P}(\mathcal{X}) \mapsto \mathcal{H}_{\XX}$ defined as
\begin{equation}
    \Psi(\mathbb{P})(\cdot) := \mathbb{E}_{\mathbb{P}}[\phi(X)] = \int_{\mathcal{X}} k(x,\cdot) d \mathbb{P}(x).
    \label{eq:KernelMeanEmbedding}
\end{equation}

We have the following result from \cite{SGSS07,Muandet2017} on the reproducing property of the expectation operator in the RKHS. 

\begin{lemma}[Lemma 3.1 \cite{Muandet2017}]
If $\mathbb{E}_{\mathbb{P}}[\sqrt{k(X,X)}] < \infty$, then $\Psi(\mathbb{P}) \in \HH_{\XX}$ and $\mathbb{E}_{\mathbb{P}}[f(X)] = \innerRKHS{f}{\Psi(\mathbb{P})}$. 
\label{lemma:aux-lemma-KME}
\end{lemma}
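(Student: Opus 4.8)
The plan is to prove the statement via the Riesz representation theorem, which avoids the technicalities of Bochner integration in $\HH_{\XX}$ while using exactly the stated moment hypothesis. The first step is to check that the relevant expectations are finite. For any $f \in \HH_{\XX}$, the reproducing property gives $f(x) = \innerRKHS{f}{k(x,\cdot)}$, so by Cauchy--Schwarz $|f(x)| \le \|f\|_{\HH_{\XX}}\sqrt{k(x,x)}$. Integrating against $\Pb$ and invoking $\Eb_{\Pb}[\sqrt{k(X,X)}] < \infty$ yields $\Eb_{\Pb}[|f(X)|] \le \|f\|_{\HH_{\XX}}\,\Eb_{\Pb}[\sqrt{k(X,X)}] < \infty$, so $f(X)$ is $\Pb$-integrable for every $f$ in the RKHS.

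Next I would package this into a single functional. Define $T : \HH_{\XX} \to \Rb$ by $T(f) := \Eb_{\Pb}[f(X)]$. Linearity is immediate from linearity of expectation, and the bound from the first step shows $|T(f)| \le \Eb_{\Pb}[\sqrt{k(X,X)}]\,\|f\|_{\HH_{\XX}}$, so $T$ is a bounded linear functional with operator norm at most $\Eb_{\Pb}[\sqrt{k(X,X)}]$. By the Riesz representation theorem (Theorem~4.11 in \cite{Brezis11}) there is a unique $\mu \in \HH_{\XX}$ such that $T(f) = \innerRKHS{f}{\mu}$ for all $f \in \HH_{\XX}$. This already establishes the second claim in the form $\Eb_{\Pb}[f(X)] = \innerRKHS{f}{\mu}$, once $\mu$ is identified with $\Psi(\Pb)$.

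The identification step is to evaluate $\mu$ pointwise using the reproducing property: for each $x \in \XX$, $\mu(x) = \innerRKHS{\mu}{k(x,\cdot)} = T(k(x,\cdot)) = \Eb_{\Pb}[k(x,X)] = \int_{\XX} k(x,x')\,d\Pb(x') = \Psi(\Pb)(x)$. Hence $\mu$ and $\Psi(\Pb)$ coincide as functions on $\XX$, which gives $\Psi(\Pb) = \mu \in \HH_{\XX}$ (the first claim), and substituting back yields $\Eb_{\Pb}[f(X)] = \innerRKHS{f}{\Psi(\Pb)}$ (the second claim). I would also note that the pointwise integral defining $\Psi(\Pb)(x)$ converges absolutely, since $|k(x,x')| = |\innerRKHS{\phi(x)}{\phi(x')}| \le \sqrt{k(x,x)}\sqrt{k(x',x')}$ implies $\int_{\XX}|k(x,x')|\,d\Pb(x') \le \sqrt{k(x,x)}\,\Eb_{\Pb}[\sqrt{k(X,X)}] < \infty$; this guarantees the object we identify $\mu$ with is genuinely well-defined.

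The main obstacle is really the very first step: one must recognize that the hypothesis $\Eb_{\Pb}[\sqrt{k(X,X)}] < \infty$ is precisely what is needed both to make every $f(X)$ integrable and to bound $\|T\|$, since $\sqrt{k(x,x)} = \|\phi(x)\|_{\HH_{\XX}}$ is the RKHS norm of the canonical feature. An alternative route defines $\Psi(\Pb)$ directly as the Bochner integral $\int_{\XX}\phi(x)\,d\Pb(x)$, for which $\int_{\XX}\|\phi(x)\|_{\HH_{\XX}}\,d\Pb(x) = \Eb_{\Pb}[\sqrt{k(X,X)}] < \infty$ gives Bochner integrability, and interchanging the bounded functional $\innerRKHS{f}{\cdot}$ with the integral delivers the claim; however this requires strong measurability of $x \mapsto \phi(x)$ (via separability of $\HH_{\XX}$ or the Pettis theorem), an extra technical burden that the Riesz argument sidesteps entirely.
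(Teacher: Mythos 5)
Your proof is correct. The paper does not actually prove this lemma---it states it as a cited result (Lemma~3.1 of Muandet et al.)---and your argument, bounding the expectation functional $f \mapsto \mathbb{E}_{\mathbb{P}}[f(X)]$ via Cauchy--Schwarz and the hypothesis $\mathbb{E}_{\mathbb{P}}[\sqrt{k(X,X)}]<\infty$, invoking the Riesz representation theorem, and then identifying the representer pointwise with $\Psi(\mathbb{P})$ through the reproducing property, is precisely the standard proof underlying that citation (going back to Smola et al.), so it matches the intended approach rather than offering a genuinely different route.
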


\changesroundOne{Lemma \ref{lemma:aux-lemma-KME} implies that} $\Psi(\mathbb{P})$ is indeed an element of the RKHS $\HH_{\XX}$, and that the expectation of any function of the random variable $X$ can be computed by means of an inner product between the corresponding function and the kernel mean embedding. \changesroundOne{Let $\{ \widehat{x}_{(1)}, \ldots, \widehat{x}_{(m)}\}$ be a collection of $m$ independent samples from the distribution $\mathbb{P}$ and the consider the random variable $\widehat{\Psi}: \XX^m \times \mathcal{P}(\XX) \mapsto \HH_{\XX}$ as}

\begin{equation}
    \widehat{\Psi}(\mathbb{P})(\cdot) = \frac{1}{m} \sum_{i = 1}^m \phi(\widehat{x}_{(i)}) = \frac{1}{m} \sum_{i = 1}^m k(\widehat{x}_{(i)},\cdot).
    \label{eq:empirical_KME}
\end{equation}
In other words, $\widehat{\Psi}(\mathbb{P})$ is the mean embedding of the empirical distribution $\Pbhat_m := \frac{1}{m}\sum^m_{i=1} \delta_{\widehat{x}_{(i)}}$ induced by the samples.


\subsection{Kernel-based ambiguity sets}

\changesroundOne{There are several ways to introduce a metric in the space of probability measures (see \cite{Panchenko19} for more details). In the context of RKHS and the kernel mean embedding introduced in the previous section, a common metric is a type of integral probability metric \cite{SGSS07} called the \emph{Maximum mean discrepancy (MMD)}, which is defined as}
\changesroundOne{\begin{align}
    \mathtt{MMD}(\Pb,\Qb) &= \sup_{\|f\|_{\HH_{\XX}} \leq 1} \innerRKHS{f}{\Psi(\Pb)} - \innerRKHS{f}{\Psi(\Qb)} \nonumber \\  &= ||\Psi(\Pb) - \Psi(\Qb)||_{\HH_{\XX}}. 
    \label{eq:MMD-def}
\end{align}
\noindent In this work}, we consider data-driven MMD ambiguity sets induced by observed samples \changesroundOne{$\{\widehat{x}_{(1)},\ldots, \widehat{x}_{(m)}\}$} defined as
\changesroundOne{\begin{equation}\label{eq:mmd-ambiguity-set}
\mathcal{M}^{\changesroundOne{\epsilon}}_m := \{\Pb \in \mathcal{P}(\XX)~|~\mathtt{MMD}(\Pb,\Pbhat_m) \leq \epsilon\},
\end{equation}
\noindent where $\Pbhat_m$ is the empirical distribution defined earlier}. Thus, $\MM^\epsilon_m$ contains all distributions whose kernel mean embedding is within distance $\epsilon \geq 0$ of the kernel mean embedding of the empirical distribution. The above ambiguity set also enjoys a \changesroundOne{sharp uniform convergence guarantees} of $\mathcal{O}\left(\frac{1}{\sqrt{m}}\right)$ as shown in \cite{Nemmour2022}. Details are omitted for brevity.



\ignore{
\subsection{Kernel Distributionally Robust Optimization}

With the above definitions at hand, we now introduce the DRO problem with kernel or MMD based ambiguity set and an associated strong duality result from \cite{zhu2021kernel} for the following primal and dual problems:
\begin{subequations}\label{eq:kernel_dro_primal_dual}
	\begin{align}
	v_P :=
	\min_{z \in \ZZ} & \sup_{\Pb \in \mathtt{MMD}^\theta_N} \Eb_{\Pb} [c(z,x)], \label{eq:kernel_dro_primal}
	\\
	v_D := \min_{z \in \ZZ, f_0 \in \Rb, f \in \HH_{\XX}} & f_0 + \theta ||f||_{\HH_{\XX}} + \frac{1}{N} \sum^N_{i=1} f(\widehat{x}_i) \label{eq:kernel_dro_dual}
	\\ \text{s.t.} \qquad & c(z,x) \leq f_0 + f(x), \qquad \forall x \in \XX.
	\end{align}
\end{subequations}

\begin{theorem}\longthmtitle{Zero-duality gap~\cite{zhu2021kernel}}\label{theorem:kernel_dro_duality}
Assume that $c(z,\cdot)$ is proper and upper semicontinuous. Then, strong duality holds for the problems in \eqref{eq:kernel_dro_primal_dual}. 
\end{theorem}

In \cite{zhu2021kernel}, the DRO problem is defined for a broader class of kernel based ambiguity sets and consequently, the duality result is more general in nature. Connections with the class of integral probability metrics, Wasserstein DRO problems, and robust optimization are discussed. Finally, solving the dual problem by approximating the function $f$ by random Fourier features is also discussed.}


\subsection{RKHS embedding of conditional distributions}


We now consider random variables of the form $(Y,X)$ taking values over the space $\YY \times \XX$. Let $\HH_{\YY}$ be the RKHS of real valued functions defined on $\YY$ with positive definite kernel $k_\YY: \YY \times \YY \mapsto \Rb$ and feature map $\phi_{\YY}: \YY \mapsto \HH_{\YY}$. \changesroundOne{For any distribution $\mu \in \mathcal{P}(\YY \times \XX)$, we define the bilinear covariance operator \cite{Baker1973} $\mathrm{cov}: \HH_\YY \times \HH_\XX \mapsto \mathbb{R}$ as}
\begin{equation}
    \mathrm{cov}(g,f) = \int_{\YY \times \XX} g(y) f(x) \mu(dy,dx).
    \label{eq:cov-operator-def}
\end{equation}

\changesroundOne{For each fixed $g \in \HH_{\YY}$, we define a cross-covariance operator $C_{XY}: \HH_{\YY} \mapsto \HH_{\XX}$ as the unique operator, which is well-defined due to Riesz representation theorem, such that, for all $f \in \HH_{\XX}$, we have}
\changesroundOne{\[\innerRKHS{f}{C_{XY}g} = \mathrm{cov}(g,f).\]
\noindent Similarly, we define the operator $C_{YX}: \HH_{\XX} \mapsto \HH_{\YY}$, and one may notice that $C_{YX} = C_{XY}^\star$, where $C_{XY}^\star$ is the adjoint operator of $C_{XY}$.} When $\YY=\XX$, the analogous operator $C_{YY}$ is called the covariance operator. \changesroundOne{Of interest to our approach is} the conditional mean embedding of the conditional distributions.

\changesroundOne{For two measurable spaces $(\XX,\mathcal{B}_\XX)$ and $(\YY,\mathcal{B}_{\YY})$, where $\mathcal{B}_{\XX}$ and $\mathcal{B}_{\YY}$ are associated $\sigma$-algebras, a stochastic kernel $T:\YY \rightarrow \mathcal{P}(\XX)$ is a measurable mapping from $\YY$ to the set of probability measures equipped with the topology of weak convergence. Please refer to \cite{Bertsekas.Shreve78}, Chapter 7, for a more detailed definition.} 




\begin{definition}[Definition 4.1 \cite{Muandet2017}]
\changesroundOne{Given a stochastic kernel $T:\YY \mapsto \mathcal{P}(\XX)$, its conditional mean embedding is a mapping $\CME{}: \YY \mapsto \HH_{\XX}$ such that, for all $f \in \HH_{\XX}$, the following property holds
\begin{equation}
    \innerRKHS{\psi(y)}{f} = \int_{\XX} f(x) T(dx \mid y) = \Eb_{X \sim T(\cdot \mid y)}[f(X)],
    \label{eq:CME-def}
\end{equation}    
that is, the inner product of the conditional mean embedding with a function in $f \in \HH_{\XX}$ coincides with the conditional expectation of $f$ under the stochastic kernel $T$.}
\label{defi:CME}
\end{definition}

\changesroundOne{A crucial result to the success of conditional mean embedding in machine learning applications is the relation proved in \cite{Baker1973,Fukumizu2004} between the mappings in Definition \ref{defi:CME} and the covariance and cross-covariance operators $C_{XY}$ and $C_{YY}$ as shown in the sequel.
\begin{proposition}[Corollary 3, \cite{Fukumizu2004}]
    Let $\Qb \in \mathcal{P}(\YY)$ be a probability measure over $\YY$ and $T:\YY \mapsto \mathcal{P}(\XX)$ be a stochastic kernel. Define the measure $\mu \in \mathcal{P}(\YY \times \XX)$ as the unique measure such that $\mu(U \times V) = \int_{U} \Qb(dy) T(V \mid y)$ and consider its corresponding covariance operators as in \eqref{eq:cov-operator-def}. Then we have that 
        \[
            \psi(y)(\cdot) = C_{XY} C_{YY}^{-1} k_{\YY}(\cdot,y),
        \]
    where $\psi$ is the conditional mean embedding of Definition \ref{defi:CME}.
    \label{prop:CME-main-result}
\end{proposition}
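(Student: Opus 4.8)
The plan is to verify that the candidate element $C_{XY} C_{YY}^{-1} k_\YY(\cdot,y)$ satisfies the defining property of the conditional mean embedding in Definition~\ref{defi:CME}: that its $\HH_\XX$ inner product against every $f \in \HH_\XX$ reproduces the conditional expectation $\Eb_{X \sim T(\cdot \mid y)}[f(X)]$. Since the functional $f \mapsto \Eb_{X \sim T(\cdot\mid y)}[f(X)]$ is bounded (its Riesz representative $\psi(y)$ exists in $\HH_\XX$ by Lemma~\ref{lemma:aux-lemma-KME} applied to $T(\cdot\mid y)$, using boundedness of $k$), the embedding $\psi(y)$ is uniquely characterized by this identity; hence matching inner products for all $f$ suffices.

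The central step is the operator identity $C_{YY}\, h_f = C_{YX} f$ for every $f \in \HH_\XX$, where $h_f(\cdot) := \int_\XX f(x)\, T(dx\mid\cdot)$ is the conditional expectation viewed as a function on $\YY$. I would prove it by testing against an arbitrary $g \in \HH_\YY$. On one side, unfolding the definition of $C_{YY}$ and using the disintegration $\mu(dy,dx) = \Qb(dy)\,T(dx\mid y)$ gives
\[
\innerRKHSY{g}{C_{YY} h_f} = \int_\YY g(y)\, h_f(y)\, \Qb(dy) = \int_{\YY\times\XX} g(y) f(x)\, \mu(dy,dx) = \mathrm{cov}(g,f).
\]
On the other side, the adjoint relation $C_{YX} = C_{XY}^\star$ together with the defining property \eqref{eq:cov-operator-def} of $C_{XY}$ yields $\innerRKHSY{g}{C_{YX} f} = \innerRKHS{C_{XY} g}{f} = \mathrm{cov}(g,f)$. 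Since $g$ is arbitrary, $C_{YY} h_f = C_{YX} f$.

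Assuming $C_{YY}$ is invertible, this rearranges to $h_f = C_{YY}^{-1} C_{YX} f$, and I would then chain the reproducing property with the self-adjointness of $C_{YY}^{-1}$ and the adjoint relation:
\begin{align*}
\innerRKHS{\psi(y)}{f} &= h_f(y) = \innerRKHSY{h_f}{k_\YY(\cdot,y)} = \innerRKHSY{C_{YY}^{-1} C_{YX} f}{k_\YY(\cdot,y)} \\
&= \innerRKHSY{C_{YX} f}{C_{YY}^{-1} k_\YY(\cdot,y)} = \innerRKHS{f}{C_{XY} C_{YY}^{-1} k_\YY(\cdot,y)}.
\end{align*}
Reading off the Riesz representative from the left- and right-hand sides identifies $\psi(y) = C_{XY} C_{YY}^{-1} k_\YY(\cdot,y)$, as claimed.

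The hardest part is not the algebra but the functional-analytic hypotheses underpinning it. First, one needs $h_f \in \HH_\YY$ for every $f \in \HH_\XX$, i.e.\ every conditional expectation must itself be an RKHS element; this is a genuine regularity assumption on $T$ and the kernels, and it is precisely what legitimizes the reproducing step $h_f(y) = \innerRKHSY{h_f}{k_\YY(\cdot,y)}$. Second, $C_{YY}$ is typically compact, so $C_{YY}^{-1}$ is unbounded and $C_{YY}^{-1} k_\YY(\cdot,y)$ is only meaningful when $k_\YY(\cdot,y) \in \mathrm{Range}(C_{YY})$. I would therefore make these conditions explicit (as in \cite{Fukumizu2004}) and remark that, in their absence, the identity must be replaced by the regularized surrogate $C_{XY}(C_{YY} + \lambda I)^{-1} k_\YY(\cdot,y)$, which recovers $\psi(y)$ in the limit $\lambda \to 0^{+}$.
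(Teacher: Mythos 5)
Your proposal is correct and follows essentially the same route the paper indicates: as stated in Remark~\ref{rmk:CME}, the paper's (omitted, cited to \cite{Fukumizu2004}) argument reduces to showing $\innerRKHS{f}{C_{XY}C_{YY}^{-1}k_{\YY}(\cdot,y)} = \innerRKHS{f}{\psi(y)}$ for all $f \in \HH_{\XX}$, which is precisely what you establish via the identity $C_{YY}h_f = C_{YX}f$ and the adjoint/reproducing-property chain. Your explicit flagging of the regularity hypotheses ($h_f \in \HH_{\YY}$ and $k_{\YY}(\cdot,y) \in \mathrm{Range}(C_{YY})$) and of the regularized surrogate $C_{XY}(C_{YY}+\lambda I)^{-1}k_{\YY}(\cdot,y)$ correctly supplies the details the paper omits for brevity.
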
}


\changesroundOne{The importance of Proposition \ref{prop:CME-main-result} lies in the fact that it allows us to estimate the mapping $\psi$ by means of available data. In fact,} in many applications, the joint or conditional distributions involving $\YY$ and $\XX$ are not known, rather we have access to i.i.d. samples \changesroundOne{$\{(\widehat{y}_{(i)},\widehat{x}_{(i)})\}_{i = 1}^m$ drawn from the joint distribution $\mu \in \mathcal{P}(\YY \times \XX)$ induced by an underlying stochastic kernel (namely, the system dynamics, as discussed in the next section)}. 
\changesroundOne{ \begin{remark}
    The main step of the proof of Proposition \ref{prop:CME-main-result} consists in showing that, for all $f \in \HH_{\XX}$, the relation $\innerRKHS{f}{C_{XY}C_{YY}^{-1}k_{\YY}(\cdot,y)} = \innerRKHS{f}{\psi(y)}$ holds. Details are omitted for brevity. 
     \label{rmk:CME}
 \end{remark}}
Let $K_{\changesroundOne{\YY}} \in \Rb^{\changesroundOne{m}\times \changesroundOne{m}}$ be the Gram matrix associated with $\{\widehat{y}_i\}_{i = 1}^{\changesroundOne{m}}$, that is, the matrix whose $(i,j)$-th entry given by \changesroundOne{$[K_\YY]_{ij} = k_\YY(\widehat{y}_{(i)},\widehat{y}_{(j)})$}. The empirical estimate of the conditional mean embedding $\psi(x)$ was derived in \cite{SHSF09} and is stated below.

\begin{theorem}[Theorem 4.2 \cite{Muandet2017}]
\label{theo:conditional-mean-embedding}
An empirical estimate of the conditional mean embedding $\psi : \YY \rightarrow \HH_{\XX}$ is given by
\begin{equation}\label{eq:def_empirical_c_kme}
    \changesroundOne{\widehat{\psi}_m(y)}(\cdot) = \sum^m_{i=1} \beta_i(y) k_\XX(\widehat{x}_{(i)},\cdot),
\end{equation}
where $\beta(y) = (K_\YY+m \lambda \mathbf{I}_m)^{-1} k_{\mathbf{y}} (y) \in \Rb^m$ with \changesroundOne{$k_{\mathbf{y}}(y) = [k_\YY(\widehat{y}_{(1)},y) \quad k_\YY(\widehat{y}_{(2)},y) \quad \ldots k_\YY(\widehat{y}_{(m)},y)]^\top \in \Rb^m$}, $\mathbf{I}_m$ being the identity matrix of dimension \changesroundOne{$m$} and $\lambda > 0$ being a regularization parameter. 
\end{theorem}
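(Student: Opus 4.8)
The plan is to derive the stated estimator by replacing the population covariance operators appearing in Proposition~\ref{prop:CME-main-result} with their empirical counterparts and then resolving the resulting ill-posed inverse through Tikhonov regularization; the claim is a derivation of the estimator form rather than an exact identity. Recall from Proposition~\ref{prop:CME-main-result} that the exact conditional mean embedding satisfies $\psi(y) = C_{XY}C_{YY}^{-1}k_\YY(\cdot,y)$. Since only the samples $\{(\widehat{y}_{(i)},\widehat{x}_{(i)})\}_{i=1}^m$ are available, I would first introduce the empirical estimates of the covariance operators,
\[
\widehat{C}_{XY} = \frac{1}{m}\sum_{i=1}^m \phi_\XX(\widehat{x}_{(i)})\otimes\phi_\YY(\widehat{y}_{(i)}), \qquad \widehat{C}_{YY} = \frac{1}{m}\sum_{i=1}^m \phi_\YY(\widehat{y}_{(i)})\otimes\phi_\YY(\widehat{y}_{(i)}),
\]
obtained by pushing the empirical joint distribution $\widehat{\mu}_m = \frac{1}{m}\sum_{i=1}^m \delta_{(\widehat{y}_{(i)},\widehat{x}_{(i)})}$ through the definition in~\eqref{eq:cov-operator-def}. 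The immediate obstacle is that $\widehat{C}_{YY}$ is finite-rank, hence non-invertible on the infinite-dimensional space $\HH_\YY$, so the direct plug-in $\widehat{C}_{XY}\widehat{C}_{YY}^{-1}$ is ill-defined; this is precisely where the parameter $\lambda>0$ enters, replacing $\widehat{C}_{YY}^{-1}$ by the bounded operator $(\widehat{C}_{YY}+\lambda I)^{-1}$ and yielding the candidate estimator $\widehat{\psi}_m(y) = \widehat{C}_{XY}(\widehat{C}_{YY}+\lambda I)^{-1}k_\YY(\cdot,y)$.

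The second step would be to recast these operators in terms of the finite-dimensional sample data. Define the feature operators $\Upsilon:\Rb^m\mapsto\HH_\YY$ and $\Phi:\Rb^m\mapsto\HH_\XX$ by $\Upsilon\alpha = \sum_{i=1}^m \alpha_i \phi_\YY(\widehat{y}_{(i)})$ and $\Phi\alpha = \sum_{i=1}^m \alpha_i \phi_\XX(\widehat{x}_{(i)})$, so that $\widehat{C}_{YY}=\frac{1}{m}\Upsilon\Upsilon^\star$, $\widehat{C}_{XY}=\frac{1}{m}\Phi\Upsilon^\star$, the adjoint $\Upsilon^\star$ satisfies $\Upsilon^\star\Upsilon = K_\YY$, and $\Upsilon^\star k_\YY(\cdot,y)=k_{\mathbf{y}}(y)$ by the reproducing property. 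With this notation the candidate estimator becomes $\widehat{\psi}_m(y)=\frac{1}{m}\Phi\Upsilon^\star(\frac{1}{m}\Upsilon\Upsilon^\star+\lambda I)^{-1}k_\YY(\cdot,y)$.

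The crux of the argument is the operator push-through identity $\Upsilon^\star(\Upsilon\Upsilon^\star+c I)^{-1}=(\Upsilon^\star\Upsilon+c I)^{-1}\Upsilon^\star$, valid for any $c>0$ because both shifted operators are positive definite (hence invertible) and the identity follows by right-multiplying $\Upsilon^\star(\Upsilon\Upsilon^\star+cI)=(\Upsilon^\star\Upsilon+cI)\Upsilon^\star$ by the two inverses. Pulling the factor $\frac{1}{m}$ out of the inverse turns $(\frac{1}{m}\Upsilon\Upsilon^\star+\lambda I)^{-1}$ into $m(\Upsilon\Upsilon^\star+m\lambda I)^{-1}$, cancelling the leading $\frac{1}{m}$, and applying the identity with $c=m\lambda$ transfers the inversion from the infinite-dimensional space $\HH_\YY$ to the $m\times m$ matrix level, giving
\[
\widehat{\psi}_m(y)=\Phi\,(K_\YY+m\lambda\mathbf{I}_m)^{-1}\Upsilon^\star k_\YY(\cdot,y)=\Phi\,(K_\YY+m\lambda\mathbf{I}_m)^{-1}k_{\mathbf{y}}(y).
\]
Setting $\beta(y)=(K_\YY+m\lambda\mathbf{I}_m)^{-1}k_{\mathbf{y}}(y)$ and expanding $\Phi\beta(y)=\sum_{i=1}^m\beta_i(y)k_\XX(\widehat{x}_{(i)},\cdot)$ recovers exactly the claimed expression~\eqref{eq:def_empirical_c_kme}. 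I expect the main technical care to lie in justifying the push-through identity and the boundedness of $(\widehat{C}_{YY}+\lambda I)^{-1}$ in the infinite-dimensional operator setting, rather than in the algebra itself; an alternative and equivalent route, following the vector-valued regression viewpoint, would obtain the same $\beta(y)$ as the representer-theorem solution of the regularized least-squares problem $\min_{W}\frac{1}{m}\sum_{i=1}^m\|\phi_\XX(\widehat{x}_{(i)})-W\phi_\YY(\widehat{y}_{(i)})\|_{\HH_\XX}^2+\lambda\|W\|_{\mathrm{HS}}^2$ over Hilbert--Schmidt operators $W:\HH_\YY\mapsto\HH_\XX$.
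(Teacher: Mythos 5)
Your derivation is correct. Note, however, that the paper itself gives no proof of Theorem \ref{theo:conditional-mean-embedding}: it is imported verbatim from the literature (Theorem 4.2 of \cite{Muandet2017}, originally derived in \cite{SHSF09}), accompanied only by the remark that the regularization is needed for well-posedness of the empirical estimator of $C_{YY}^{-1}$. Your argument---forming the empirical covariance operators from the empirical joint measure, Tikhonov-regularizing the finite-rank (hence non-invertible) $\widehat{C}_{YY}$, and using the push-through identity to move the inversion from $\HH_{\YY}$ down to the $m \times m$ matrix $K_\YY + m\lambda \mathbf{I}_m$---is exactly the standard derivation in those cited references and is consistent with the paper's remark; your closing alternative via regularized vector-valued regression is likewise the route the paper attributes to \cite{GLB12b,Micchelli2005}.
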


The above empirical estimate can also be obtained by solving a regularized regression problem as established in  \cite{GLB12b,Micchelli2005}. \changesroundOne{The regularization is necessary for the well-posedness of the empirical estimator of the inverse of the operator $C_{YY}$.}


\section{Kernel Distributionally Robust Optimal Control}
\label{sec:problem-formulation}

\changesroundOne{We now connect the abstract mathematical framework introduced in the previous sections in the context of optimal control problems. Let $(\XX, \mathcal{F}, \mathbb{P})$ be a probability measure space, where $\XX \subset \Rb^\dimState$; we use such measure-theoretic framework to formalize statements about the stochastic process in the sequel.
Consider the discrete-time stochastic system given by} 
\begin{equation}
    x_{k+1} \sim T(\cdot|x_k,a_k), \quad a_k \in \mathcal{A}(x_k), \quad x_0 = \bar{x},
    \label{eq:DT_stochasticSystem}    
\end{equation}
where $x_k \in \XX \subset \Rb^\dimState$ and $a_k \in \mathcal{A}(x) \subset \Rb^\dimControl$ denote the state and control input at time $k \in \mathbb{N}$, with $\mathcal{A}(x_k)$ being the set of admissible control inputs at time $k$, and $\bar{x}$ denotes the initial state. \changesroundOne{Observe that we denote the system dynamics by a stochastic kernel $T: \XX \times \mathcal{A}(\XX) \rightarrow \mathcal{P}(\XX)$ defined over the state-input pair, which describes a probability distribution over the next state. It is a standard machinery to assign the semantics for the dynamical system in \eqref{eq:DT_stochasticSystem} in such a way that it defines a unique probability measure in the space of sequences with value in $\XX$ using the Kolmogorov extension theorem. See \cite{Durrett10} for more details.}

\changesroundOne{We define a sequence of \textit{history-dependant (or non-Markovian)} control policies $\pi_k:(\XX\times \mathcal{A}(\XX))^{k} \times \XX \rightarrow \mathcal{P}(\mathcal{A}(\XX))$, which maps a sequence of state-input pair $(x_0,a_0,\ldots,x_{k-1},a_{k-1},x_k)$ into a probability measure with support in the space of admissible inputs $\mathcal{A}(\XX)$.} The collection of admissible control policies over a horizon of length $L$ is given by the set 
\changesroundOne{\begin{align*}
    \Pi_L = \{(\pi_0,\pi_1,&\ldots,\pi_{L-1})\mid\pi_k(\mathcal{A}(x_k)|h) = 1, \\ &\text{for all } k \in \{0,\ldots, L-1\}, \text{for all } \\ & \hspace{2cm}h \in (\XX \times \mathcal{A}(\XX))^{k-1}\times \XX\}, 
\end{align*}
composed by the set $L$ control policies with support on the admissible input set associated with the current system state. In case the control policy depends only on the current state, i.e., $\pi_k(h) = \pi_k(h')$ for all $h,h' \in (\XX \times \mathcal{A}(\XX))^{k-1}\times \XX$ that agree on the last entry, then we call such a policy a \textit{Markovian policy}.}

\changesroundOne{Inspired by a growing interest in data-driven control methods \cite{Thorpe2019,Thorpe2022,Persis2020}, our objective is to design an admissible control policy that minimizes a performance index with respect to the generated trajectories of the system dynamics given in \eqref{eq:DT_stochasticSystem} by relying on available data set $\{\widehat{x}_{(i)},\widehat{a}_{(i)},\widehat{x}_{(i)}^+\}_{i = 1}^m$, composed by a sequence of state-input-next-state tuple. To account for the sampling error due to the finite dataset, we leverage techniques from distributionally robust optimization \cite{HCL19,MohajerinEsfahani2018,Shapiro2017,Yang2018} by creating a  \textit{state-dependant ambiguity set} around the estimate of the system dynamics, which is obtained using the mathematical framework introduced in Section \ref{sec:prelim}.}





\changesroundOne{Formally, and referring to the notation used in Section \ref{sec:prelim}, let us identify the space $\YY$ with the cartesian product $\XX \times \mathcal{A}(\XX)$ and $\XX$ with the state space of the dynamics in \eqref{eq:DT_stochasticSystem}. Let $k_\YY: (\XX \times \mathcal{A}(\XX)) \times (\XX \times \mathcal{A}(\XX)) \rightarrow \mathbb{R}$ and $k_{\XX}: \XX \times \XX \rightarrow \mathbb{R}$ be the corresponding kernels that induce, respectively, the RKHS $\HH_\YY$ and $\HH_\XX$. Let $T: \XX \times \mathcal{A}(\XX) \rightarrow \mathcal{P}(\XX)$ be the stochastic kernel associated with the state-transition matrix, and $\psi$ and $\widehat{\psi}$ be the conditional mean embeddings as in Definition \ref{defi:CME} and Theorem \ref{theo:conditional-mean-embedding}, respectively; the latter obtained from the dataset $\{\widehat{x}_{(i)},\widehat{a}_{(i)},\widehat{x}_{(i)}^+\}_{i = 1}^{m}$.} 

\changesroundOne{A key novelty of our approach is the construction of an ambiguity set using the kernel mean embedding introduced in \eqref{eq:KernelMeanEmbedding}: 
\begin{equation}
    \MM^\epsilon(x,a) =  \{ \Pb \in \mathcal{P}(\XX) \mid \|\Psi(\Pb) - \psi(x,a) \|_{\HH_\XX} \leq \epsilon \},
    \label{eq:ambiguity-set-def}
\end{equation}
which is a collection of probability measures over $\XX$ whose mean embedding is $\epsilon$ close to the conditional mean embedding of the system dynamics. Notice the dependence of the current state-action pair through the conditional mean embedding. Similar to the control policy definition, we define a collection of \textit{admissible dynamics} for a given time-horizon $L$. Formally, for a given sequence $(x_0,a_0,\ldots,x_{L-1},a_{L-1})$ of state-input pairs of size $L$ we define the set of admissible dynamics as 
}
\begin{align}
    \Gamma_L = \left\{(\mu_0,\mu_1,\ldots,\mu_{L-1}) \mid \mu_k \in \MM^\epsilon(x_k,a_k) \right. \nonumber\\
    \left. \text{for all } k \in \{0,\ldots,L-1\} \right\},
    \label{eq:admissible-dynamics}
\end{align}
where $\mathcal{M}^\epsilon(x_k,a_k)$ denotes the \changesroundOne{ambiguity set in \eqref{eq:ambiguity-set-def}.}

\changesroundOne{Our main goal is to solve a distributionally robust dynamic programming using the ingredients presented so far. To this end, let $L \in \mathbb{N}$ be the time-horizon, and consider the corresponding set of control policies $\Pi_L$ and admissible dynamics $\Gamma_L$. For any $\pi \in \Pi_L$ and $\mu \in \Gamma_L$, and for any initial state $\bar{x} \sim \mu_0$, where $\mu_0$ is the first entry of the admissible dynamics $\mu$, we denote by $\Pb^{\pi,\mu}$ the induced measure on the space of sequences in $\XX$ of length $L$.} For a given stage cost $c: \changesroundOne{\XX \times \mathcal{A}(\XX)} \to \Rb$, we define the finite horizon expected cost as
\begin{equation}
    V_{\changesroundOne{L}}\changesroundOne{(\pi,\mu)} := \Eb^{\changesroundOne{\pi,\mu}}\left[\sum^{L-1}_{k=0} c(x_k,a_k) \right],
    \label{eq:value-function-def}
\end{equation}
where \changesroundOne{$\Eb^{\pi,\mu}$} denotes the expectation operator with respect to \changesroundOne{$\Pb^{\pi,\mu}$}. Our goal is to find a policy $\pi^\star \in \changesroundOne{\Pi_L}$ that solves the distributionally robust control problem given by
\changesroundOne{\begin{equation}\label{eq:minmax_control}
    \inf_{\pi \in \Pi_L} \sup_{\mu \in \Gamma_L} V_L(\pi,\mu). 
\end{equation}}
\changesroundOne{Problem \eqref{eq:minmax_control} consists of a min-max infinite-dimensional problem, since the collection of control policy and admissible dynamics are infinite-dimensional spaces; and it has been studied in several related communities \cite{Shapiro2017} under different lenses. This paper departs from related contributions within the control community \cite{Thorpe2022,Thorpe2019}, showing that under technical conditions and leveraging compactness-related arguments that there exist optimal Markovian policies for the solution of \eqref{eq:minmax_control}.} \changesroundOne{To this end, we consider the following technical conditions. } 

\begin{assumption}
    \label{assumption:main}
    Let $\changesroundOne{V \subset }~\XX \times \mathcal{A}(\XX) \times \mathcal{P}(\XX)$ \changesroundOne{be a weakly compact set such that $(x,a,\mu) \in V$ if and only if $\mu \in \MM^\epsilon(x,a)$ for some $\epsilon > 0$}. The following conditions hold.
        \begin{enumerate}
            \item The stage cost function $c(x,a)$ \changesroundOne{in \eqref{eq:value-function-def}}  is lower semicontinuous. Besides, there exists a constant $C \geq 0$ and a continuous function \changesroundOne{$w:\XX \rightarrow [1,\infty)$} such that 
                \[ |c(x,a)| \leq C w(x), \quad \forall a \in \mathcal{A}(x),~x \in \XX.\]

            \item  For every bounded, continuous function \changesroundOne{$f: \XX \to \Rb$, the functional $\varphi: V \rightarrow \Rb$, defined as $(x,a,\mu)  \mapsto  \int_{\XX} f(\xi) \mu(d\xi)$ is continuous on $V$.} 
            \item There exists \changesroundOne{a constant $C > 0$ such that $\varphi(x,a,\mu) \leq C w(x)$ for all $(x,a,\mu) \in V$.} 
            \item The set $\mathcal{A}(x)$ is compact for each $x \in \XX$. Furthermore, the set-valued mapping $x \mapsto \mathcal{A}(x)$ is upper semi-continuous.
        \end{enumerate}
\end{assumption}

\begin{theorem}
    Suppose Assumption \ref{assumption:main} holds. Then there exists a collection of functions $f_k:\XX \rightarrow \mathbb{R}^{\dimControl}$, for $k \in \{0,\ldots, L-1\}$, such that the Markov policy $(f_0,\ldots,f_{L-1})$ is the optimal solution of the infinite-dimensional problem \eqref{eq:minmax_control}.
    \label{theo:main-result}
\end{theorem}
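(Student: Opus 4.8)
The plan is to recast the infinite-dimensional min-max problem \eqref{eq:minmax_control} as a finite-horizon minimax Markov control problem in the framework of \cite{GonzalezTrejo2002}, to verify that Assumption \ref{assumption:main} supplies exactly the structural hypotheses that framework requires, and then to invoke its existence result. Concretely, the two facts I must establish are that the inner (adversarial) supremum over the ambiguity set is attained and depends semicontinuously on the state-action pair, and that the outer (control) infimum is attained by a \emph{measurable} selector; together these yield a Markovian and deterministic optimal policy.

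First I would set up the minimax dynamic programming recursion backward in time. Setting $J_L^\star \equiv 0$, I define for $k = L-1, \ldots, 0$
\[
J_k^\star(x) = \inf_{a \in \mathcal{A}(x)} \Bigl\{ c(x,a) + \sup_{\mu \in \MM^\epsilon(x,a)} \int_\XX J_{k+1}^\star \, d\mu \Bigr\}.
\]
Because both the stage cost and the ambiguity set $\MM^\epsilon(x_k,a_k)$ depend only on the current state-action pair, the adversary gains nothing from conditioning on the past; this separability is what makes the full-horizon value decouple stage-wise and is the source of the Markovian structure. A backward induction, controlled by the $w$-weighted growth bounds in conditions (1) and (3) of Assumption \ref{assumption:main}, shows that each $J_k^\star$ is lower semicontinuous and satisfies $|J_k^\star(x)| \leq C_k\, w(x)$ for suitable constants $C_k$.

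For the inner supremum, fix $(x,a)$: the set $\MM^\epsilon(x,a)$ is a section of the weakly compact set $V$ of Assumption \ref{assumption:main}, hence weakly compact, and by condition (2) the map $\mu \mapsto \int_\XX J_{k+1}^\star \, d\mu$ is weakly upper semicontinuous, so the supremum is attained. A Berge maximum theorem argument, using the weak compactness of $V$ together with the joint continuity of the functional $\varphi$, then transfers the appropriate semicontinuity to $(x,a) \mapsto \sup_{\mu} \int_\XX J_{k+1}^\star \, d\mu$, so that $Q_k(x,a) := c(x,a) + \sup_{\mu \in \MM^\epsilon(x,a)} \int_\XX J_{k+1}^\star \, d\mu$ is lower semicontinuous. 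For the outer infimum, compactness of $\mathcal{A}(x)$ and upper semicontinuity of the correspondence $x \mapsto \mathcal{A}(x)$ (condition (4)), combined with lower semicontinuity of $Q_k$, guarantee that the minimum over $a$ is attained; a measurable selection theorem (of Kuratowski--Ryll-Nardzewski type, as employed in \cite{GonzalezTrejo2002}) then produces a measurable deterministic map $f_k(x) \in \argmin_{a \in \mathcal{A}(x)} Q_k(x,a)$. A concluding verification shows by induction that the Markov deterministic policy $(f_0,\ldots,f_{L-1})$ attains $J_0^\star(\bar x)$ and that no history-dependent randomized policy in $\Pi_L$ can do better, which identifies $J_0^\star(\bar x)$ with the value of \eqref{eq:minmax_control}.

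The main obstacle I anticipate is the regularity transfer in the inner problem. The value function $J_{k+1}^\star$ produced by the recursion is only lower semicontinuous and merely $w$-bounded, whereas condition (2) asserts weak continuity of $\mu \mapsto \int_\XX f \, d\mu$ only for \emph{bounded continuous} $f$. Bridging this gap --- approximating $J_{k+1}^\star$ from below by bounded continuous functions and controlling the tails through the weight $w$ and condition (3), so that weak upper semicontinuity and the Berge argument persist for the actual (unbounded, merely lsc) value function --- is the delicate technical core. The companion difficulty, namely ensuring that the correspondence $(x,a) \mapsto \MM^\epsilon(x,a)$ varies semicontinuously enough to preserve measurability of $Q_k$, is precisely what the weak compactness of the set $V$ is imposed to resolve.
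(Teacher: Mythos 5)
Your proposal follows the same overall architecture as the paper's proof (backward minimax DP recursion, reduction to the framework of \cite{GonzalezTrejo2002}, semicontinuity plus compactness plus measurable selection), but it has a genuine gap at exactly the point where the paper does its real technical work: the \emph{lower semicontinuity of the set-valued map} $(x,a) \mapsto \MM^\epsilon(x,a)$. For the robust cost-to-go $Q_k(x,a) = c(x,a) + \sup_{\mu \in \MM^\epsilon(x,a)} \int_\XX J_{k+1}^\star\, d\mu$ to be lower semicontinuous --- which is what your Berge-type argument, the attainment of the outer minimum over $\mathcal{A}(x)$, and the measurable-selection step all require --- you need the correspondence to be lower semicontinuous: given $(x_n,a_n) \to (x,a)$ and $\mu \in \MM^\epsilon(x,a)$, you must produce $\mu_n \in \MM^\epsilon(x_n,a_n)$ converging to $\mu$. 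You assert that this difficulty ``is precisely what the weak compactness of the set $V$ is imposed to resolve,'' but that is the wrong direction: weak compactness of the graph $V$ yields at most \emph{upper} semicontinuity of the correspondence, and upper semicontinuity only transfers upper semicontinuity to the supremum. A correspondence with compact graph can easily fail lower semicontinuity (take $F(x) = \{0\}$ for $x \neq 0$ and $F(0) = [0,1]$; the supremum of the identity over $F(x)$ is $0$ for $x\neq 0$ and $1$ at $x=0$, which is not lsc), in which case $Q_k$ need not be lsc and the whole induction collapses. The paper closes this gap with a dedicated argument that your proposal has no substitute for: invoking the distance criterion of \cite[Proposition 1.4.7]{Aubin.Frankowska09}, it shows $d(\nu, \MM^\epsilon(x_n,a_n)) \to 0$ for every $\nu \in \MM^\epsilon(x,a)$ via the triangle inequality $\|\Psi(\nu) - \psi(x_n,a_n)\|_{\HH_\XX} \leq \|\Psi(\nu) - \psi(x,a)\|_{\HH_\XX} + \|\psi(x,a) - \psi(x_n,a_n)\|_{\HH_\XX}$ together with continuity of the conditional mean embedding $(x,a) \mapsto \psi(x,a)$. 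This is where the RKHS structure of the ambiguity set is actually used, and it is the main content of the proof.

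Two smaller divergences are worth noting. First, you obtain weak compactness of each $\MM^\epsilon(x,a)$ by reading it off as a section of $V$; the paper instead derives it structurally, showing that the RKHS ball $\mathcal{C}^\epsilon(x,a) = \{f \in \HH_\XX : \|f - \psi(x,a)\|_{\HH_\XX} \leq \epsilon\}$ is convex and closed, hence weakly compact by Kakutani's theorem (reflexivity of the Hilbert space), and then pulling back through the embedding via $\MM^\epsilon(x,a) = \Psi^{-1}(\mathcal{C}^\epsilon(x,a))$. Your shortcut is defensible given the literal wording of Assumption \ref{assumption:main} (though its sections are $\bigcup_{\epsilon>0}\MM^\epsilon(x,a)$ rather than the fixed-$\epsilon$ set), but compactness, however obtained, cannot stand in for the missing lower-semicontinuity argument. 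Second, the mismatch you flag between condition (2) (stated for bounded continuous $f$) and the merely lsc, $w$-bounded value functions is real, but the paper does not resolve it by approximation as you propose; it reduces without loss of generality to bounded costs ($w \equiv 1$) and defers the weighted-norm extension to the machinery of \cite{GonzalezTrejo2002}, so what you describe as the ``delicate technical core'' is dispatched by citation, while the correspondence-regularity issue you relegate to a closing remark is the actual core.
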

\begin{proof}
    
    Our proof strategy is inspired by the analysis in \cite{GonzalezTrejo2002}. Without loss of generality, we may assume that the cost function $c: \XX \times \mathcal{A} \rightarrow \mathbb{R}^n$ is bounded, that is, $w \equiv 1$, (otherwise one could consider a weighted norm using the function $w$ appearing in item $a$ of Assumption \ref{assumption:main} -- see \cite{GonzalezTrejo2002} for more details.) For each bounded function $g: \XX \rightarrow \mathbb{R}$, we define the functionals $H: \HH_\XX \times \XX \times \mathcal{A} \times \mathcal{P}(\XX) \rightarrow \mathbb{R}$ and $ H^{\#}: \HH_\XX \times \XX \times \mathcal{A} \rightarrow \mathbb{R}$, and the mapping $H^{\dag}:\HH_\XX \rightarrow \HH_\XX$ as 
    \begin{align}
        & H(g,x,a,\nu) := c(x,a) + \int_{\XX} g(\xi)\nu(d \xi), 
        \\ & H^{\#}(g,x,a) := \sup_{\nu \in \mathcal{M}^\epsilon(x,a)} H(g,x,a,\nu),
        \\ & H^{\dag}(g)(x) := \inf_{a \in \mathcal{A}(x)} H^{\#}(g,x,a) \nonumber
        \\ & \quad = \inf_{a \in \mathcal{A}(x)} \sup_{\nu \in \mathcal{M}^\epsilon(x,a)} \left[ c(x,a) + \int_{\XX} g(\xi) \nu(d\xi) \right]. \label{eq:mmd_dp_def}
    \end{align}
    Specifically, \eqref{eq:mmd_dp_def} defines the distributionally robust dynamic programming (DP) operator under MMD ambiguity defined in \eqref{eq:ambiguity-set-def}, which is an infinite-dimensional optimization problem. Given any initial function $f \in \HH_\XX$, the value function of the distributionally robust control problem can be defined iteratively as
    \begin{align}
        v_{L-1}(x) & := f(x), \text{ for all } x \in \XX, \nonumber \\
        v_k(x) & := H^\dag(v_{k+1})(x), \text{for all } x \in \XX, \label{eq:DP_def}
    \end{align}
    for $k = 0, \ldots, L-2$. We are now ready to show that problem \eqref{eq:minmax_control} admits a non-randomized, Markov policy which is optimal. To this end, we need to show (1) lower semi-continuity of the functional $H$ and (2) that the inner supremum in the definition of $H^\dag$ is achieved.

    To show lower semi-continuity, we follow a similar approach as the proof of \cite[Theorem 3.1]{GonzalezTrejo2002} and \cite[Theorem 1]{Yang2018}. The primary challenge is to show that the functional $H^\dag$ used in \eqref{eq:DP_def} preserves the lower semi-continuity of the value function. We show this via induction. Let $v_{k+1}$ be lower semicontinuous. Following identical arguments as \cite[Lemma 3.3]{GonzalezTrejo2002}, it can be shown that $H^\dag(v_{k+1})$ is lower semicontinuous on $\XX$, provided one can show that the mapping $(x,a) \mapsto \mathcal{M}^\epsilon(x,a)$ is weakly\footnote{Due to the fact that we are dealing with some infinite-dimensional spaces, we rely on the topological notion of continuity. A function between two topological spaces (please refer to \cite{Munkres74} for an introduction to these concepts) $(\YY,\tau_\YY)$ and $(\XX,\tau_\XX)$, $f: \YY \mapsto \XX $ is continuous if for all $U \in \tau_{\XX}$ we have that $f^{-1}(U) \in \XX$. The notions of weakly continuous, weakly compact, etc, are used due to the fact that we equip the infinite-dimensional spaces $\mathcal{P}(\XX)$ and $\HH_{\XX}$ with the $\text{weak}^*$ topology. We refer the reader to \cite{Brezis11}, Chapter 4, for more details about these concepts.} compact and lower semi-continuous (i.e., the condition analogous to \cite[Assumption 3.1(g)]{GonzalezTrejo2002}).

    To show weak compactness of $(x,a) \mapsto \mathcal{M}^\epsilon(x,a)$, let us first study properties of the set 
    \[
    \mathcal{C}^\epsilon(x,a) = \{ f \in \mathcal{H}_{\mathcal{X}}: \| f - \psi(x,a) \|_{\HH_\XX} \leq \epsilon \},
    \]
    where $\psi$ is the conditional mean embedding mapping of Definition \ref{defi:CME} using the notation of Section \ref{sec:problem-formulation}. Note that $\mathcal{C}^\epsilon(x,a)$ is a convex subset of the RKHS $\mathcal{H}_{\mathcal{X}}$; hence, by \cite[Theorem 3.7]{Brezis11}, it is also weakly closed. Since Hilbert spaces are  reflexive Banach spaces, then by Kakutani's theorem (see \cite[Theorem 3.17]{Brezis11}) we show that the set $\mathcal{C}^\epsilon(x,a)$ is weakly compact. Now, observe that 
    \[
    \mathcal{M}^\epsilon(x,a) = \Psi^{-1}(\mathcal{C}^\epsilon(x,a)),
    \]
     where the mapping $\Psi: \mathcal{P}(\mathcal{X}) \mapsto \mathcal{H}_{\mathcal{X}}$, defined in \eqref{eq:KernelMeanEmbedding}, is continuous due to boundedness and continuity of the kernel, thus weakly continuous. Since $\Psi$ is also surjective\footnote{For any function $f \in \mathcal{H}_{\mathcal{X}}$ there exists a sequence $f_n(x) = \sum_{i = 1}^m \beta_i(n)k(x_i,x)$ such that $\lim_{n \to \infty}f_n(x) = f(x)$. Let $\mathbb{P}_n = \sum_{i = 1}^m \beta_i(n) \delta_{x_i}$ and notice that $\Psi(\mathbb{P}) = \Psi(\lim_{n \to \infty} \mathbb{P}_n) = f$, thus showing that $\Psi$ is a surjective mapping.}, we have by the open mapping theorem (\cite[Theorem 2.6]{Brezis11}) that $\mathcal{M}^\epsilon(x,a)$ is also weakly compact.

    For a fixed $\epsilon >0$, we now show that the mapping $(x,a) \mapsto \mathcal{M}^\epsilon(x,a)$ is lower semicontinuous. We define the distance function from a distribution $\nu \in \mathcal{P}(\XX)$ to a convex subset $S$ of $\mathcal{M}^\epsilon(x,a)$ as\footnote{This distance is only well-defined due to the weak compactness result shown above, as it would allow us to take convergent subsequences for any sequence achieving the infimum in this definition.} 
    $$ d(\nu,S) := \inf_{\mu \in S} ||\Psi(\mu) - \Psi(\nu)||_{\HH_{\XX}}. $$ 
    Let $(x,a,\nu)$ be given, with $x\in\XX$, $a\in \mathcal{A}(x)$, and $\nu \in \mathcal{M}^\epsilon(x,a)$. Thus, 
    $$ ||\Psi(\nu) - \psi(x,a)||_{\HH_{\XX}} \leq \epsilon. $$ 

    Consider a sequence $(x_n,a_n)_{n \geq 0}$ with $a_n \in \mathcal{A}(x_n), \forall n \geq 0$ and $\lim_{n \to \infty} (x_n,a_n) = (x,a)$. From \cite[Proposition 1.4.7]{Aubin.Frankowska09}, it follows that the lower semi-continuity of $(x,a) \mapsto \mathcal{M}^\epsilon(x,a)$ is equivalent to
    $$ \nu \in \liminf_{n \to \infty} \mathcal{M}^\epsilon(x_n,a_n) \iff \lim_{n \to \infty} d(\nu,\mathcal{M}^\epsilon(x_n,a_n)) = 0.$$
    To this end, we compute
    \begin{align*}
        & ||\Psi(\nu) - \psi(x_n,a_n)||_{\HH_{\XX}} \leq ||\Psi(\nu) - \psi(x,a)||_{\HH_{\XX}} 
        \\ & \qquad \qquad + ||\psi(x,a) - \psi(x_n,a_n)||_{\HH_{\XX}}
        \\ \implies & \lim_{n \to \infty} ||\Psi(\nu) - \psi(x_n,a_n)||_{\HH_{\XX}} \leq \epsilon 
        \\ & \qquad \qquad + \lim_{n \to \infty} ||\psi(x,a) - \psi(x_n,a_n)||_{\HH_{\XX}}
        \\ \implies & \lim_{n \to \infty} ||\Psi(\nu) - \psi(x_n,a_n)||_{\HH_{\XX}} \leq \epsilon
        \\ \implies & \lim_{n \to \infty} d(\nu,\mathcal{M}^\epsilon(x_n,a_n)) = 0,
    \end{align*}
    from the definition of the ambiguity set. In the second last step, the second term goes to $0$ as $\lim_{n \to \infty} (x_n,a_n) = (x,a)$ due to the continuity of the kernel function and the definition of $\psi(x,a)$ in Proposition \ref{prop:CME-main-result}. As a result, $\nu \in \liminf_{n \to \infty} \mathcal{M}^\epsilon(x_n,a_n)$ and thus, $\mathcal{M}^\epsilon(x,a) \subseteq \liminf_{n \to \infty} \mathcal{M}^\epsilon(x_n,a_n)$. 

    With the above properties of the mapping $(x,a) \mapsto \mathcal{M}^\epsilon(x,a)$ in hand, it can be shown following identical arguments as the proof of \cite[Theorem 3.1]{GonzalezTrejo2002} that both $H^{\#}(v_{k+1},x,a)$ and $\mathcal{H}^\dag(v_{k+1})$ are lower semicontinuous and there exists a function $f_{k}: \XX \to \mathcal{A}(x)$ such that $a_{k} = f_{k}(x_{k})$ is the minimizer of \eqref{eq:mmd_dp_def} for $u = v_{k+1}$. This concludes the proof.

\end{proof}

\begin{remark}
    Note that the ambiguity sets considered in related works on distributionally robust control \cite{Yang2018,Yang2020} do not depend on the current state and action, i.e., the mapping $(x,a) \mapsto \mathcal{M}^\epsilon(x,a)$ is independent of $(x,a)$, and as a result, properties such as compactness and lower semi-continuity are easily shown. In contrast, the ambiguity set considered here is more general and directly captures the dependence of the transition kernel on the current state-action pair. In the DRO literature, such ambiguity sets are referred to as {\it decision-dependent ambiguity sets} \cite{Wood2021,Noyan2022,Luo2020}, which are relatively challenging to handle, and less explored in the past. 
\end{remark}

\changesroundOne{By showing that there exist deterministic and Markovian policies in the optimal set of \eqref{eq:minmax_control}, Theorem \ref{theo:main-result} allows to reduce the search of general history-dependant policies to functions $\pi:\XX \rightarrow \mathcal{A}(\XX)$, which maps at each time instant a state to an admissible action. However, the result of Theorem \ref{theo:main-result} does not lead to a computationally tractable formulation towards solving problem \eqref{eq:minmax_control}. To this end, we now describe two approximation schemes that enable us to obtain a computational solution to \eqref{eq:minmax_control}.
The first approximation is related to the fact that we are aiming towards a data-driven solution of \eqref{eq:minmax_control}; hence, the ambiguity set in \eqref{eq:ambiguity-set-def} is replaced by its sample-based counterpart, where its center is defined according to the estimate of the conditional mean embedding mapping introduced in Proposition \ref{prop:CME-main-result}. That is, we consider instead the ambiguity set
\begin{equation}
    \widehat{\MM}^{\epsilon}_m(x,a) = \{\Pb \in \mathcal{P}(\XX)\mid \| \Psi(\Pb) - \widehat{\psi}_m(x,a) \|_{\HH_{\XX}} \leq \epsilon\}.
    \label{eq:ambiguity-set-empirical-def}
\end{equation}}

\changesroundOne{Under this approximation and exploiting the rectangular structure of the admissible dynamics in \eqref{eq:admissible-dynamics}, one may observe that the inner supremum in \eqref{eq:ambiguity-set-def} is computed separately for each time instant and that if we replace $\MM^\epsilon$ by  $\widehat{\MM}^{\epsilon}_m$ in the corresponding inner optimization we obtain, for any $f' \in \HH_\XX$,} 
\begin{align}
\sup_{\mu_k \in \widehat{\MM}^\epsilon_m(x_k,a_k)} \Eb_{X \sim \mu_k}[f'(X)] =  \sup_{\Psi(\mu_k) \in \widehat{\mathcal{C}}_k} \innerRKHS{f'}{\Psi(\mu_k)},
\label{eq:apprx-kernel-ambiguity}
\end{align}
\changesroundOne{\noindent where the last equality is a consequence of the reproducing property and} 
\[ \widehat{\mathcal{C}}_k := \{f \in \HH_{\XX} \mid ||f - \widehat{\psi}_m(x_k,a_k)||_{\HH_{\XX}} \leq \epsilon\} \subseteq \HH_{\XX}.\] 

Following \cite{ZJDS21}, the support of $\widehat{\mathcal{C}}_k$ can then be computed as
\begin{align*}
& \sigma_{\widehat{\mathcal{C}}_k}(f') = \sup_{f \in \widehat{\mathcal{C}}_k} \innerRKHS{f'}{f}
\\ & = \sup_{f \in \widehat{\mathcal{C}}_k} \quad \innerRKHS{f'}{f - \widehat{\psi}_m(x_k,a_k)} + \innerRKHS{f'}{\widehat{\psi}_m(x_k,a_k)}
\\ & = \epsilon ||f'||_{\HH_{\XX}} + \sum_{i = 1}^m \beta_i(x_k,a_k) f'(\changesroundOne{\widehat{x}_{(i)}^+}),
\end{align*}
where $\beta_i(x_k,a_k)$ denote the coefficients of the empirical estimate of the conditional mean embedding as given in \eqref{eq:def_empirical_c_kme} \changesroundOne{evaluated at the state-action pair at the time instant $k \in \{0, \ldots, L\}$, and where $\widehat{x}_{(i)}^+$ represents the collected sample\footnote{\changesroundOne{That is, for each $i\in \{1,\ldots,m\}$, $\widehat{x}_{(i)}^+$ is a sample from $T(\cdot \mid \widehat{x}_{(i)}, \widehat{a}_{(i)})$. Within the machine learning literature, the transition kernel $T$ is usually referred to as a ``generative model''.}} along the observed trajectories of the dynamics}. To compute the norm $\|f'\|_{\mathcal{H}_{\mathcal{X}}}$, we solve, similar to Theorem \ref{theo:conditional-mean-embedding}, a regression problem given by
\begin{align}
    \minimize_{\alpha_i, i = 1, \ldots, m} & \quad \left\| \sum_{i = 1}^m \alpha_i k(\widehat{x}_{(i)}',\cdot) - f'\right\|_{\mathcal{H}_{\mathcal{X}}} + \lambda \|\alpha\|_2^2,
    \label{eq:regression-RKHS-norm}
\end{align}
where \changesroundOne{$\{\widehat{x}_{(1)}', \ldots, \widehat{x}_{(m)}'\}$} are arbitrary points in the domain of the function $f'$. The solution of this regression problem is given by $\|f'\|_{\mathcal{H}_{\mathcal{X}}} = \sqrt{\alpha^\top K_\XX' \alpha}, $ where
\[
\quad \alpha = (K_\XX' + \lambda I)^{-1} \changesroundOne{\begin{bmatrix}
f'(\widehat{x}_{(1)}') &
\hdots &
f'(\widehat{x}_{(m)}')
\end{bmatrix}^\top}.
\]
Variable \changesroundOne{$K_\XX'$} is the 
\changesroundOne{$m \times m$} Gram matrix associated with the available points, namely, it is the positive semidefinite matrix whose $(i,j)$-entry is given by \changesroundOne{$k_\XX(\widehat{x}_{(i)}',\widehat{x}_{(j)}')$}. Notice that the collection of points used to estimate $\|f'\|_{\HH_\XX}$ may not necessarily coincide with those points used to estimate the conditional mean embedding in Theorem \ref{theo:conditional-mean-embedding}; \changesroundOne{we denote this in our notation by adding a prime as a superscript}. Same comment applies for the regularizer $\lambda$ appearing in \eqref{eq:regression-RKHS-norm} and in the expression of $\beta$ in Theorem \ref{theo:conditional-mean-embedding}.

We now discuss how to solve for optimal control inputs \changesroundOne{using a value iteration approach. To this end, we introduce our second approximation of the distributionally robust dynamic programming \eqref{eq:minmax_control} by discretizing the action space. At a given state $x \in \XX$, we discretize the input space $\mathcal{A}(x)$, denoting the resulting discrete set as $\{a^{(1)},\ldots,a^{(M)}\}$. To define the value function, we modify slightly the set of admissible dynamics $\Gamma_L$ in \eqref{eq:admissible-dynamics} by fixing the initial distribution to a given point $x \in \XX$. In other words, we let\footnote{ \changesroundOne{The symbol $\delta_x$, for some $x \in \XX$, represents the Diract measure, i.e., a probability measure that assigns unit mass to the point $x \in \XX$.}} $\mu_0 = \delta_{x}$ for some $x \in \XX$. Then, for a given function $f:\XX \rightarrow \Rb$, we define recursively the collection of functions $v_\ell: \XX \rightarrow \Rb$, where $\ell \in \{0, \ldots, L-2\}$, as $v_{L-1} = f$, and}
\begin{align}
    v_{\ell}(x) = \min_{a^{(j)}: j=1,\ldots,M}  &c(x,a^{(j)}) \nonumber \\ &+ \sup_{\mu \in \widehat{\MM}^{\epsilon}_m(x,a^{(j)})} \int_{\XX} \!\! v_{\ell+1}(\xi) \mu(d\xi)
\end{align}


For a given $(x,a^{(j)})$, the inner supremum problem is an instance of \eqref{eq:apprx-kernel-ambiguity} \changesroundOne{and can be computed using the techniques discussed in this section.}

\subsection{\changesroundOne{Distributionally robust approach to forward invariance or safe control}}

While the discussion thus far has focused on the general optimal control problem, this approach can also be leveraged for synthesizing control inputs meeting safety specifications. \changesroundOne{Using the notation of previous sections}, let $L \in \mathbb{N}$ be the time-horizon and $\safeSet \subset \mathbb{R}^\dimState$ be a measurable safe set. For an admissible control policy $\pi \in \Pi_L$, \changesroundOne{admissible dynamics $\mu \in \Gamma_L$,} and initial state $x$, the probability of the state trajectory being safe is given by (more details can be found in \cite{Abate2008})
\begin{align}
\changesroundOne{\safeProb_\safeSet(x;\pi,\mu)} & = \mathbb{P}^{\pi,\mu}_{x_0} \{ x_k \in \safeSet, \text{ for all } k \in \{0, \ldots, L-1\} \}, \label{eq:safeProb} 
\end{align}
where $(x_0,x_2,\ldots,x_{L-1})$ denote the solution of \eqref{eq:DT_stochasticSystem} \changesroundOne{under the dynamics $\mu$ and policy $\pi$. Our goal is to solve the problem} 
\changesroundOne{\begin{equation}
    \safeProbOpt_{\safeSet}(x) = \sup_{\pi \in \Pi_L} \inf_{\mu \in \Gamma_L} \safeProb_\safeSet(x;\pi,\mu),
    \label{eq:safeProbOpt}        
\end{equation}
using the mathematical framework proposed in this paper. In fact, one can show that an analogous result of Theorem \ref{theo:main-result} holds for \eqref{eq:safeProbOpt}, namely, there exists an optimal Markovian and deterministic policy. Hence, using similar approximations as in the previous section, function $V_{\safeSet}^\star$ in \eqref{eq:safeProbOpt} can be approximated recursively as (see \cite{Abate2008} for more details)
\begin{align}
    v_{L-1}&(x)  := \mathbf{1}_{\safeSet}(x),  \nonumber \\
    v_\ell(x) & := \max_{a^{(j)}: j = 1, \ldots, M} \inf_{\mu \in \widehat{\MM}^{\epsilon}_m (x,a^{(j)})} \mathbf{1}_{\safeSet}(x) \int_{\XX}v_{\ell + 1}(\xi) \mu(d\xi),  
    \label{eq:safety_DP}
\end{align}
for $\ell \in \{0,\ldots,L-2\}$, where $\mathbf{1}_{\safeSet}$ is the indicator function of the safe set $\safeSet$.}  In the numerical example reported in the following section, we compute the safe control inputs by solving the inner problem in an identical manner as discussed in the previous subsection.

\section{Numerical examples}
\label{sec:numerical-examples}

Inspired by papers \cite{Abate2008,Yang2018}, we apply our methods to study safety probability of a thermostatically controlled load given by the dynamics
\begin{equation}
x_{k+1} = \alpha x_{k} + (1-\alpha)(\theta-\eta R P a_k) + \omega_k,
\label{eq:TCL-model}
\end{equation}
where the state $x_k \in \mathbb{R}$ is the temperature, $a_k \in \{0,1\}$ is a binary control input, representing whether the load is on or off, and $\omega_k$ is a stochastic disturbance taking values in the uncertainty space $(\Omega,\mathcal{F},\mathbb{P})$. The parameters of \eqref{eq:TCL-model} are given by $\alpha = \exp(h/CR)$, where $R = 2 \degree \mathrm{C}/\mathrm{kW}$, $C = 2 \mathrm{kWh}/\degree \mathrm{C}$, $\theta = 32 \degree \mathrm{C}$, $h = 5/60~\mathrm{hour}$, $P = 14 \mathrm{kW}$, and $\eta = 0.7$. Our goal is keep the temperature within the range $\mathcal{S} = [19 \degree \mathrm{C},22 \degree C]$ for $90$ minutes.

Our goal is to compute a control policy purely on available sampled trajectories for the model \eqref{eq:TCL-model} and without solving an optimization problem at each iteration. To this end, we let \changesroundOne{$\{\widehat{x}_{(i)},\widehat{a}_{(i)},\widehat{x}_{(i)}^+\}_{i = 1}^m$} be a collection of observed transitions from the model,  where the pair \changesroundOne{$\{\widehat{x}_{(i)},\widehat{a}_{(i)}\}$} is the set of random chosen points in the set $[19,22]\times \{0,1\}$ and \changesroundOne{$\widehat{x}_{(i)}^+$} represents the observed transitions from such a state-input pair. We then solve the dynamic programming recursion given in \eqref{eq:safety_DP} by partitioning the state space uniformly from $18\degree \mathrm{C}$ to $23\degree \mathrm{C}$ with 35 points, and using $7000$ data points to estimate the conditional kernel mean embedding map (see Theorem \ref{theo:conditional-mean-embedding}) and to compute the norm of the value function, as shown in \eqref{eq:regression-RKHS-norm}. We choose $\lambda = 200$ as the regularisation parameter, and use the kernel functions \changesroundOne{$k_{\XX}: \XX \times \XX \rightarrow \mathbb{R}$ and $k_{\YY}:(\mathcal{X} \times \mathcal{U}) \times  (\mathcal{X} \times \mathcal{U}) \rightarrow \mathbb{R}$ given by}
\changesroundOne{
	\begin{align}
		k_{\XX}(x,x') &= e^{\gamma|x-x'|^2}, \nonumber \\
		\changesroundOne{k_{\YY}}((x,a),(x',a')) &= e^{-\gamma | x - x'|^2} +  k_1(a,a'),
		\label{eq:kernels}
	\end{align}
with $\gamma = 100$} and where $k_1(a,a') = 1 + a a' + a a' \min(a,a') - \frac{a + a'}{2} \min(a,a')^2 + \frac{1}{3}\min(a,a')^3$,  for the numerical examples. The choice of the kernel $k_1$ has shown better results for this problem when compared with a Gaussian kernel. Figure \ref{fig:kernel-versus-moment} shows the obtained value function for different values of the radius $\epsilon$, where we notice a decrease in the returned value function with the increase in the size of the ambiguity set. The $y$-axis represents the safety probability and the $x$-axis is the temperature; notice that the value function is zero outside the safe set $[19\degree \mathrm{C}, 22 \degree \mathrm{C}]$. We also compare the returned value function with the one obtained using the method proposed in \cite{Yang2018} (we refer to the reader to this paper for the definition of the parameters $c$ and $b$ shown in the legend).





\begin{figure}
	\centering
	\includegraphics[width=0.8\columnwidth]{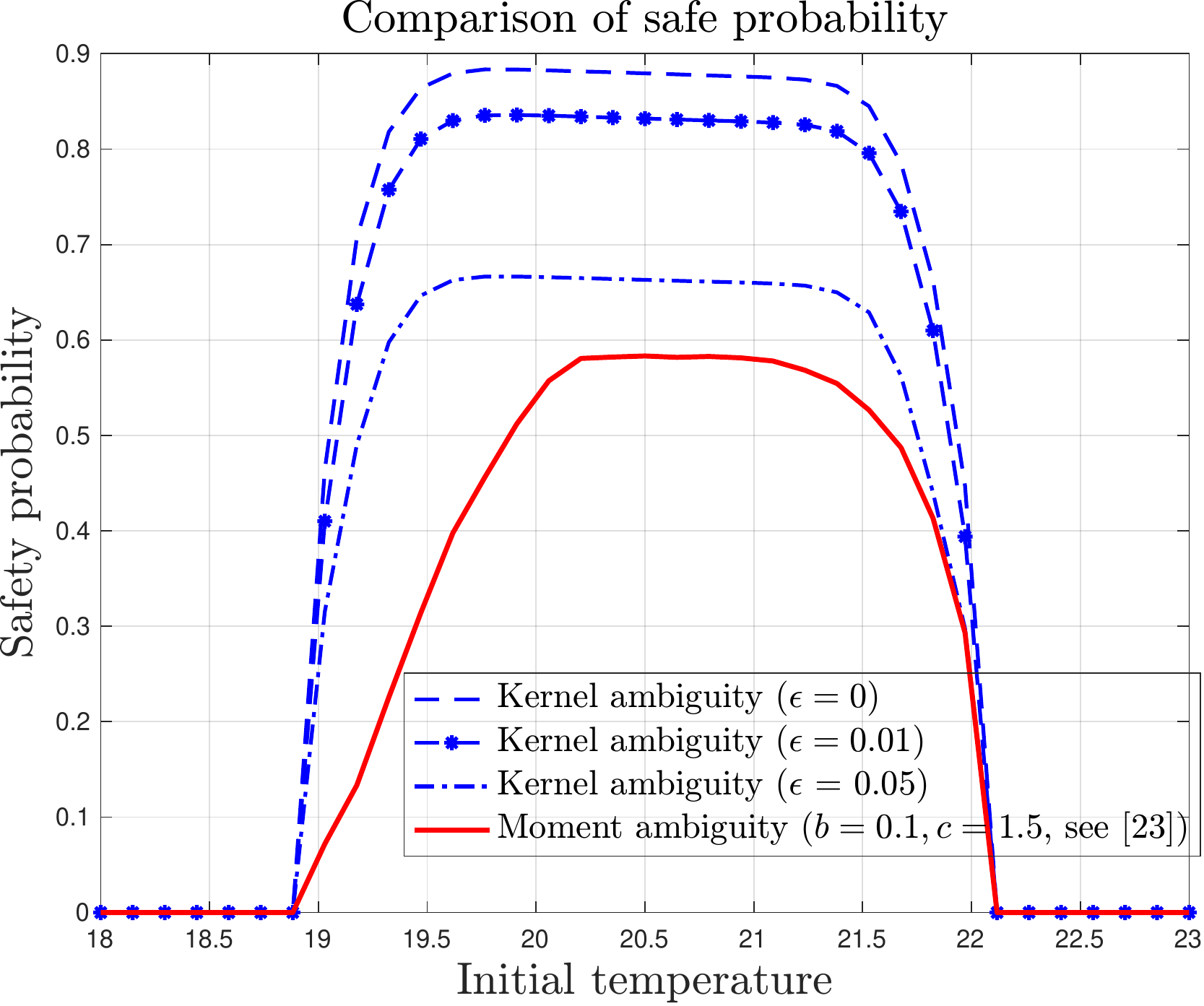}
	\caption{Solution of the dynamic programming recursion given in \eqref{eq:safety_DP} for the kernel ambiguity sets with different values of the radius (blue lines) with the kernel parameter $\gamma = 100$. We have used $7000$ state-action pairs as samples to estimate the conditional mean embedding and the norm of vectors in the RKHS. The regularisation parameter $\lambda$ is equal to $200$. The solid red line is the value function using the methods proposed in \cite{Yang2018} for $b =  0.1$ and $c = 1.5$ (as defined in \cite{Yang2018}).}
	\label{fig:kernel-versus-moment}
\end{figure}

\section{Conclusion}
\label{sec:conclusion}
We analyzed the problem of distributionally robust (safe) control of stochastic systems where the ambiguity set is defined as the set of distributions whose kernel mean embedding is within a certain distance from the empirical estimate of the conditional kernel mean embedding derived from data. We showed that there exists a non-randomized Markovian policy that is optimal and discussed how to compute the value function by leveraging strong duality associated with kernel DRO problems. Numerical results illustrate the performance of the proposed formulations and the impact of the radius of the ambiguity set. There are several possible directions for future research, including deriving efficient algorithms to perform value iteration without resorting to discretization, representing multistage state evolution using composition of conditional mean embedding operators, and performing a thorough empirical investigation on the impact of dataset size on the performance and computational complexity of the problem.

\bibliographystyle{ieeetr}
\bibliography{licio-bib-files,after-CDC}

\ignore{
\section{Ambiguity Sets and Distributionally Robust Optimization}

An ambiguity set refers to a class or family of probability distributions. We formally define a few relevant classes that have been of interest to the optimization and control community in recent years.

\subsubsection{Moment-based ambiguity sets}

Let $m \in \Rb^m$ and $\Sigma \in \Rb^{m \times m}$ be the estimated mean and covariance matrix of a random variable $Z$. The moment ambiguity set is defined as
\begin{align}\label{eq:def_moment_ambiguity}
\mathcal{D}_{b,c} & := \{\mu \in \PP(\Xi)| |\Eb_\mu[Z] - m| \leq b, \nonumber \\
& \qquad \qquad \Eb_\mu[(Z-m)(Z-m)^\top] \preceq c\Sigma\},
\end{align}
where $\Xi$ is the support of $Z$, $\PP(\Xi)$ is the set of Borel probability
measures on $\Xi$, $b\in\Rb^m$ and $c \geq 1$ are constants that indicate the confidence of the decision-maker on the estimates of mean and covariance, respectively. 

\subsubsection{Wasserstein ambiguity sets}

We now formally introduce the notion of Wasserstein distance and ambiguity sets followed by the strong duality result for Wasserstein distributionally robust optimization (DRO) problems. 

Let $\PP_p(\Xi) \subseteq \PP(\Xi)$ be the set of Borel probability
measures on $\Xi$ with finite $p$-th moment for $p \in [1,\infty)$. Let $d$ be a metric on the space $\Xi$. Following \cite{villani2003topics}, for $p \in [1,\infty)$, the $p$-Wasserstein distance between two measures $\mu, \nu \in \PP_p(\Xi)$
\begin{equation}\label{eq:def_wasserstein}
  (W_p(\mu,\nu))^p := \min_{\gamma \in \HH(\mu,\nu)}
  \left\{\int_{\Xi \times \Xi} d^p(\xi,\omega) \gamma(d\xi,d\omega) \right\},
\end{equation}
where $\HH(\mu,\nu)$ is the set of all distributions on
$\Xi \times \Xi$ with marginals $\mu$ and $\nu$. Due to lower semi-continuity of $d$, the minimum in the above definition is attained \cite{gao2016wasserstein}.

Consider a stochastic optimization problem where the distribution of uncertain parameters is not known, but a collection of samples from the underlying distribution is available. Let $\Pbhat_N := \frac{1}{N}\sum^N_{i=1} \delta_{\noise_i}$ be the empirical distribution constructed from the observed samples $\{\noise_i\}_{i \in [N]}$. The data-driven Wasserstein ambiguity set can be defined as
\begin{equation}\label{eq:wasserstein-set}
\MM^\theta_N := \{\mu \in \PP_p(\Xi)| W_p(\mu,\Pbhat_N) \leq
\theta\},
\end{equation}
which contains all distributions that are within a distance $\theta \geq 0$ of the empirical distribution $\Pbhat_N$. 

We now present the DRO problem over Wasserstein ambiguity sets. Let $H: {\Xi} \to {\Rb}$ and consider the following primal and dual problems
\begin{subequations}\label{eq:gao_primal_dual}
	\begin{align}
	v_P & :=
	\sup_{\mu \in \MM^\theta_N} \int_\Xi H(\xi) \mu(d\xi), \label{eq:gao_primal}
	\\
	v_D & := \inf_{\lambda \geq 0} \Bigl[ \lambda \theta^p
	+ \frac{1}{N} \sum^N_{i=1} \sup_{\xi \in \Xi} [H(\xi) - \lambda d^p(\xi,\noise_i)] \Bigr]. \label{eq:gao_dual}
	\end{align}
\end{subequations}

The following strong duality theorem from \cite{gao2016wasserstein} is often central to proving tractable reformulations of Wasserstein DRO problems. 

\begin{theorem}\longthmtitle{Zero-duality gap~\cite{gao2016wasserstein}}\label{theorem:gao_duality}
Assume that $H$ is upper semicontinuous and either $\Xi$ is bounded, or there exists $\xi_0 \in \Xi$ such that 
	$$ \underset{d(\xi,\xi_0) \to \infty}{\lim \sup}
	\frac{H(\xi)-H(\xi_0)}{d^p(\xi,\xi_0)} < \infty. $$
	Then, the dual problem~\eqref{eq:gao_dual} always admits a minimizer
	$\lambda^*$ and $v_p = v_D < \infty$.
\end{theorem}
}

\end{document}